\numberwithin{equation}{section}
\newtheorem{theorem}{Theorem}
\newtheorem{proof}{Proof}
\begin{document}

\title{Coding and Compression of Three Dimensional Meshes by Planes}

\maketitle

\begin{center}
{\large Rafik~ Aramyan *,  Gagik Mkrtchyan **,  Arman Karapetyan ***}
\end{center}

\begin{enumerate}
    \item[*]{\it
    Institute of mathematics
    of NAS of Armenia;\\
  {\small e-mail:  rafikaramyan@yahoo.com}}

  \item[**]{\it ixibit;\\
     {\small e-mail:
 mkrtchyan\_gagik@yahoo.com }}

 \item[***]{\it ixibit;\\
     {\small e-mail:
arman.karapetyan@ixibit.com}}

\end{enumerate}
\begin{abstract} The present paper suggests a new approach for geometric representation of 3D spatial models and provides a new compression algorithm for 3D meshes,  which is based on mathematical theory of convex geometry.
In our approach we represent a 3D convex polyhedron by means of planes, containing only its faces. This allows not to consider topological aspects of the problem (connectivity information among vertices and edges) since by means of the planes we construct the polyhedron uniquely. Due to the fact that the topological data is ignored this representation provides high degree of compression. Also planes based representation provides a compression of geometrical data because most of the faces of the polyhedron are not triangles but polygons with more than three vertices.
\end{abstract}

\section{Introduction}

In recent years more and more three dimensional (shortly 3D) spatial models become increasingly popular and available for advertising, World Wide Web, 3D laser scanning systems and etc. Highly detailed models are also commonly adopted in design of computer graphics. Mostly 3D graphical models are represented as complex polyhedral meshes, composed of topological and geometrical data. Topological data provides connectivity information among vertices (e.g., adjacency of vertices and edges), while geometrical attributes describe the position for each individual vertex. In terms of implementation, most of 3D graphical file formats consist of list of polygons, each of which is specified by its vertices indexes and their attributes.
	Generally speaking, real world 3D models are expensive to render, awkward to edit, and costly to transmit through networks since they contain tremendous number of vertices and polygons. For reduction of storage requirements and transmission bandwidth, it is desirable to compress these models with lossless and/or loss compression methods, which keep distortion within a tolerable level. This demands that meshes would be approximated with different resolutions and would be reduced by the coarse approximation through sequences of graphic simplifications.
Simplification and compression of 3D meshes data have been studied by many researchers. Most of the early works were focused on the simplification of graphical models.

\noindent Schroeder  \cite{S} proposed a decimation algorithm that significantly reduced the number of polygons required to represent an object.
Turk \cite{Tu}  presented an automatic method of creating surface models at several levels of detail from an original polyhedral description of a given object.
Hoppe \cite{Ho} address the mesh optimization problem of approximating a given point set by using smaller number of vertices under certain topological constraints.
Deering \cite{De} discuses the concept of generalized triangle mesh which compresses a triangle mesh structure.
Taubin  \cite{Ta} presented a topological surgery algorithm which utilized  two interleaving vertex and triangle trees to compress a model.
Hoppe  \cite{H} proposed a progressive mesh compression algorithm  that is applicable to arbitrary meshes.

\noindent The present paper suggests a new approach for geometric representation of 3D spatial models and provides a new compression algorithm for 3D meshes. In contrast to conventional representations here we suggest plane surface based representation for 3D meshes, which is based on mathematical theory of convex geometry.
In our approach we represent a 3D convex polyhedron by means of planes, containing only its faces. This allows not to consider topological aspects of the problem (connectivity information among vertices and edges) since by means of the planes we construct the polyhedron uniquely. Due to the fact that the topological data is ignored this representation provides high degree of compression. Also planes based representation provides a compression of geometrical data because most of the faces of the polyhedron are not triangles but polygons with more than three vertices.
For non convex 3D meshes we initially divide them into the groups of convex polyhedrons, and then each convex polyhedron is separately represented by its set of planes. Here we suggest an algorithms for division of non-convex 3D meshes, which divides into the convex parts by separation of convex and concave surface elements.

\noindent The features and advantages of our result will be more fully understood and appreciated upon consideration of its detailed description. First we need to describe a plane in 3D space.

\section{Representation of a plane}

We consider only oriented planes since 3D models can be represented by single sided surface. By definition, an oriented plane is a plane with specified normal direction.
An oriented plane will be denoted by $e$. Each oriented plane divides space into two hemispaces: $e_-$  and $e_+$. By $e_+$ we denote the hemispace on direction of the $e$ plane's normal, and by $e_-$ we denote the hemispace on its inverse direction (see Fig. 1). We consider the hemispace $e_-$ with its boundary, so it is closed.
An oriented plane $e$ can be represented by means of the following pair $(\omega, h)$.
Where $\omega$ is the spatial direction of its normal, which means that $\omega\in \mathbf
S^{2}$,  where $\mathbf S^2$ is the unit sphere in the three dimensional Euclidian space $\mathbb{R}^3$. And $h$ is the distance of $e$ from the origin O including its sign, which means that $h\in(-\infty,+\infty)$, and $h \geq 0$ if  the origin O belongs to $e_-$ and $h < 0$ if the origin O belongs to $e_+$  (see Fig. 2).

\begin{figure}[here]
\center{\includegraphics[width=9.8cm, height=7.5cm]{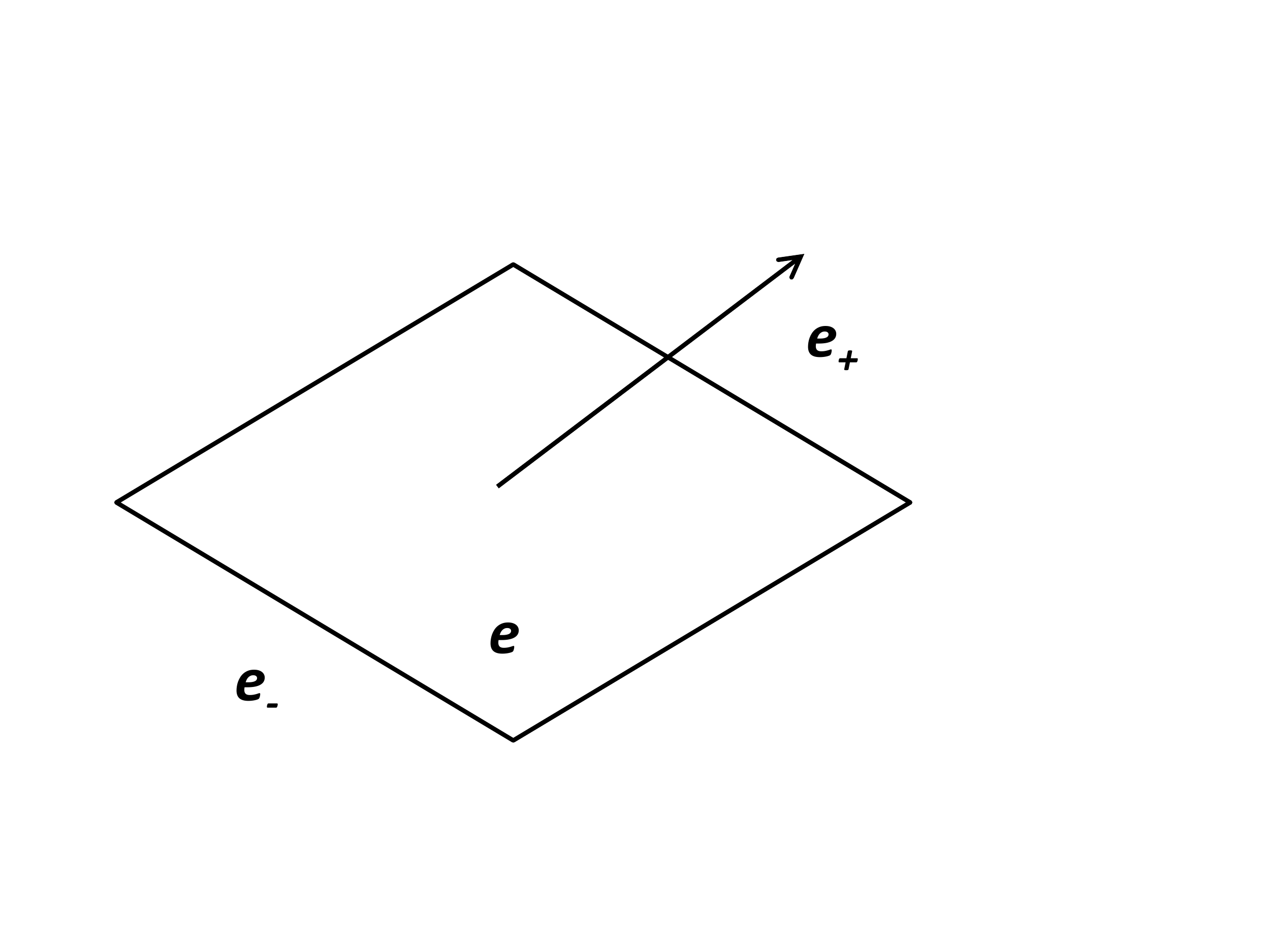}}
       \end{figure}
\centerline{\small{\bf Fig. 1.}}

\begin{figure}[here]
\center{\includegraphics[width=9.8cm, height=7.5cm]{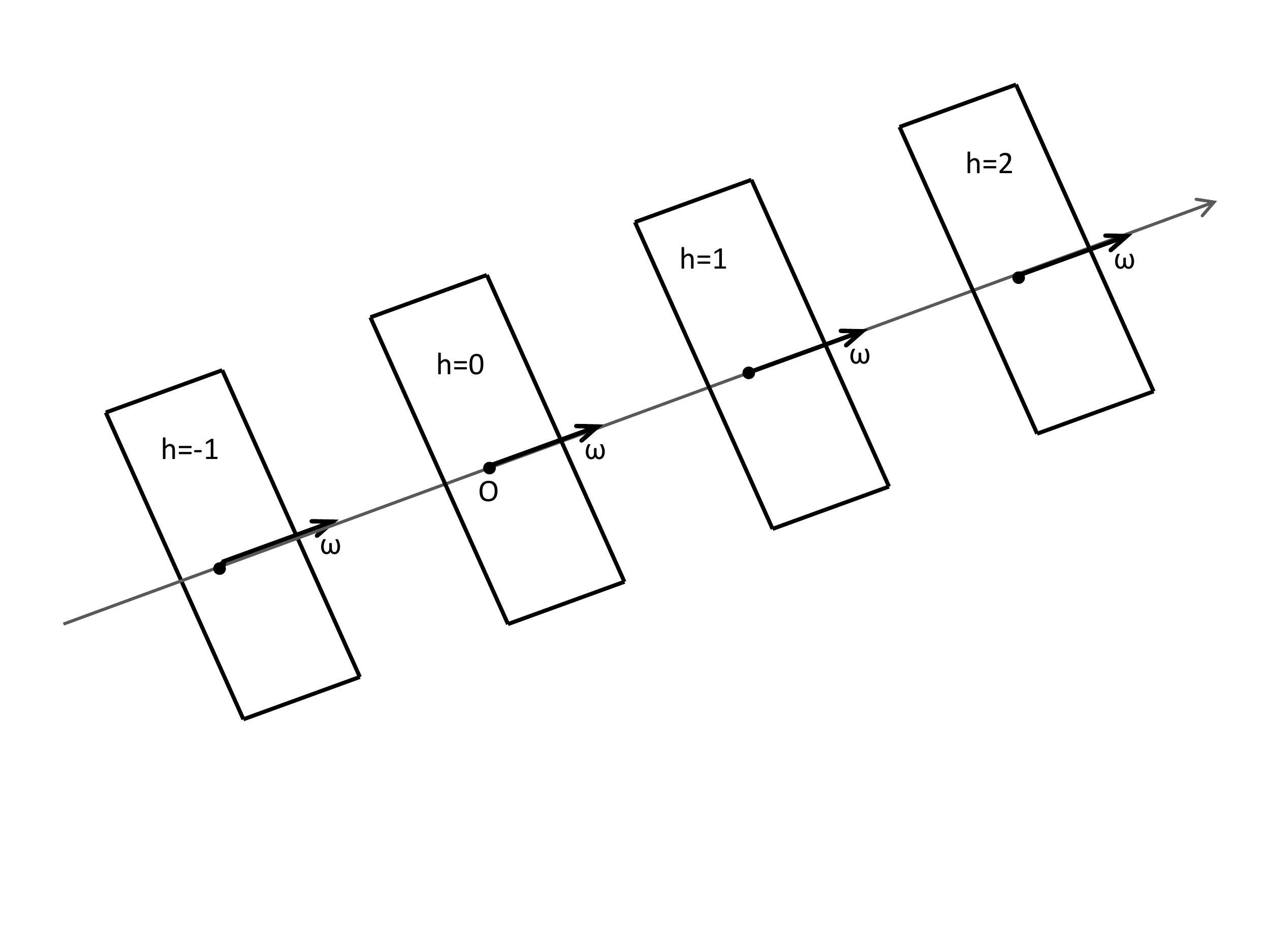}}
       \end{figure}
\centerline{\small{\bf Fig. 2.}}

\noindent Representation of a spatial direction.  Let $\omega\in \mathbf
S^{2}$ be a spatial direction, then by using spherical coordinates it can be represented as
$\omega=(\nu,\varphi)$, where $\nu$ is the angle between z-axis and $\omega$, and $\varphi$ is the angle between x-axis and the projection of the $\omega$ onto the $ xy$ plane. Note that the variable $\nu$ changes in the interval
$[0, \pi]$, while $\varphi$ changes in the interval $[0, 2\pi]$. Since spatial direction is represented by two numbers, and a plane is a pair of a spatial direction and distance $h$. Thus  any plane can be represented by three numbers.

\section{Representation of a convex polyhedron}
Let $\mathbf{P}$ be a convex polyhedron, its faces can be numbered by $i=1,2,...,n$. Let $e_i$ be an oriented plane containing $i$ -th face of the polyhedron  ($i = 1, 2,..., n$), which has outside directed normal and $\omega_i \in\mathbf
S^{2}$ be the normal of $e_i$  ($\mathbf
S^{2}$ is the unit sphere in 3D space) and $h_i$ be the distance (including sign) of the plane containing $i$ -th face of the polyhedron from the origin O, ($h_i\in (-\infty,+\infty)$).

\noindent  For a convex polyhedron $\mathbf{P}$ we will have a collection of oriented planes $\{e_i\}$,  $i=1,2,...,n$ or the collection of pairs $\{\omega_i,h_i\}$, $i=1,2,...,n$. This procedure allows representation of any convex polyhedron by a collection of planes (see Fig. 3).

\begin{figure}[here]
\center{\includegraphics[width=9.8cm, height=7.5cm]{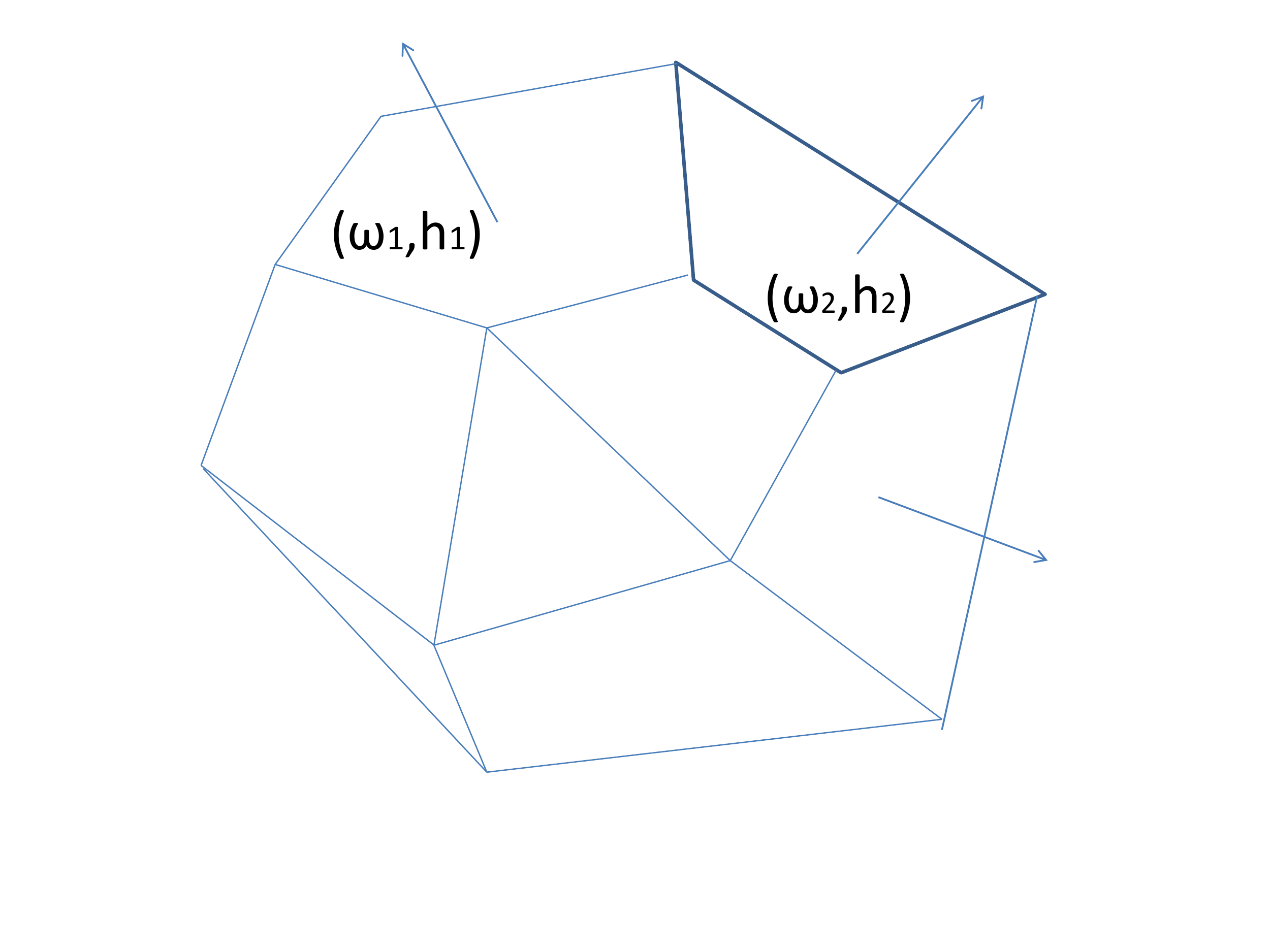}}
   \end{figure}
\centerline{\small{\bf Fig. 3.}}

\noindent As a result any convex polyhedron can be represented by a collection of triplets, since any plane can be represent by means of three numbers
\begin{equation}\label{1}
\mathbf{P}\longrightarrow \{e_i\} \Longleftrightarrow \{\omega_i,h_i\} \Longleftrightarrow  \{\nu_i,\varphi_i,h_i\}.
\end{equation}

\noindent The following theorem proves the uniqueness of this representation.

\begin{theorem} Let $\mathbf{P}$ be a convex polyhedron and $\{e_i\}$ be the collection of its oriented planes. We have
\begin{equation}\label{1.1}
\mathbf{P}= \cap_{i=1}^n \{e_i\}_- .
\end{equation}
Hence P is uniquely determined by its collection of oriented planes.\end{theorem}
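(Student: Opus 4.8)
The plan is to establish the set equality \eqref{1.1} by proving the two inclusions separately, and then to read off uniqueness as an immediate consequence. Throughout, write $Q=\cap_{i=1}^n \{e_i\}_-$ for the candidate intersection on the right-hand side.

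First I would prove $\mathbf{P}\subseteq Q$. Fix an index $i$. By construction the plane $e_i$ contains the $i$-th face of $\mathbf{P}$ and its normal $\omega_i$ points outward. Since $\mathbf{P}$ is convex, it lies entirely on one side of the supporting plane of each of its faces; because the normal $\omega_i$ is outward-directed, that side is precisely the closed hemispace $\{e_i\}_-$ opposite to $\omega_i$. Hence $\mathbf{P}\subseteq \{e_i\}_-$ for every $i$, and intersecting over $i$ yields $\mathbf{P}\subseteq Q$. This direction is essentially definitional and should cost little effort.

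The substantive direction is $Q\subseteq \mathbf{P}$, for which I would argue by contradiction using a ray-crossing argument. Assume $\mathbf{P}$ is full-dimensional and fix an interior point $o\in\mathbf{P}$. Suppose some $x\in Q$ satisfies $x\notin\mathbf{P}$. Consider the segment $[o,x]$; since $o$ is interior and $x$ exterior, this segment meets the boundary $\partial\mathbf{P}$ at a first exit point $q$. Because the boundary of a convex polyhedron is the union of its faces, $q$ lies on the $j$-th face for some $j$, hence $q\in e_j$. Just beyond $q$ the segment leaves $\mathbf{P}$, so $x$ lies strictly on the outward side of $e_j$, i.e. $x\in\{e_j\}_+\setminus e_j$ and therefore $x\notin\{e_j\}_-$. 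This contradicts $x\in Q$, and the reverse inclusion follows.

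The main obstacle is making the reverse inclusion airtight when the exit point $q$ falls on an edge or a vertex rather than in the relative interior of a single facet, since there $q$ belongs to several of the planes $e_i$ simultaneously. The ray-crossing formulation sidesteps this gracefully: I only need $q$ to lie on \emph{at least one} face plane and need the segment to be exiting $\mathbf{P}$ at $q$, which already forces $x$ to be strictly outside that particular $e_j$. Once \eqref{1.1} is established, uniqueness is immediate, since the right-hand side of \eqref{1.1} depends only on the collection $\{e_i\}$, so $\mathbf{P}$ is recovered completely and unambiguously from its planes.
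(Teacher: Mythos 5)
Your proposal is correct, and it is in fact more complete than the paper's own proof. The paper follows the same two-inclusion skeleton: the forward inclusion $\mathbf{P}\subseteq\cap_i\{e_i\}_-$ is obtained exactly as you do, from convexity and the outward orientation of the normals. But for the reverse inclusion the paper simply asserts the key claim --- ``if $x\notin\mathbf{P}$, then there exists $i$ with $x\notin\{e_i\}_-$'' --- with no justification whatsoever, even though this is precisely the nontrivial content of the theorem. Your ray-crossing argument supplies the missing justification: picking an interior point $o$, locating the exit point $q\in\partial\mathbf{P}$ of the segment $[o,x]$, and exploiting that $q$ lies on some face plane $e_j$. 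Your handling of the edge/vertex case is also right: you only need $q$ to lie on \emph{one} face plane. One presentational refinement: the phrase ``just beyond $q$ the segment leaves $\mathbf{P}$, so $x$ lies strictly outside $e_j$'' is best made airtight by noting that $o$ lies in the open halfspace (since $\mathrm{int}\,\mathbf{P}\subseteq\mathrm{int}\,\{e_j\}_-$), writing $q=(1-t)o+tx$ with $t\in(0,1)$, and evaluating the affine function $\langle\omega_j,\cdot\rangle-h_j$, which is negative at $o$ and zero at $q$, hence strictly positive at $x$ by linearity. Your standing assumption that $\mathbf{P}$ is full-dimensional (so that an interior point exists) is harmless in the paper's setting, where polyhedra are solid 3D bodies, though it is worth stating, as you do. In short: same outline as the paper, but you actually prove the step the paper takes for granted.
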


\begin{proof} Let $\mathbf{P}$ be a convex polyhedron and $e_i$ be the  oriented plane containing $i$ -th face of the polyhedron ($i = 1, 2,..., n$), which has outside directed normal.
It follows from the convexity that the polyhedron P belongs to hemi space $\{e_i\}_-$  for each $i=1,2,…,n$. Hence the polyhedron P belongs to the intersection of that hemi spaces $\mathbf{P}\subset \cap_{i=1}^n \{e_i\}_-$. Now we have to conform, that $\mathbf{P}=\cap_{i=1}^n \{e_i\}_-$. Indeed, If a point $x$ does not belong to $\mathbf{P}$, then exists a number $i$ for which $x$ does not belong to hemi space $\{e_i\}_-$, hence it also does not belong to intersection $ \cap_{i=1}^n \{e_i\}_-$.
The proof is complete.\end{proof}

\noindent  It follows from the above theorem that any convex polyhedron can be uniquely represented by means of oriented planes containing its faces, i.e. by means of the system $ \{\omega_i,h_i\}$.
Note that in cases of rotation and/or translation of a polyhedron its new representation can be recalculated very easily.
Let a polyhedron $\mathbf{P}$, with corresponding system $ \{\omega_i,h_i\}$, is translated by a vector $\overrightarrow{a}$. We denote by $ \{\omega_i^a,h_i^a\}$ the new system of representation a $\overrightarrow{a}\mathbf{P}$.
\begin{theorem} Let $\mathbf{P}$ be a convex polyhedron and $ \{\omega_i,h_i\}$ be the collection of its oriented planes. Let $\overrightarrow{a}\mathbf{P}$ be the translation of $\mathbf{P}$ by a vector $\overrightarrow{a}$. We have
\begin{equation}\label{1.1}
 \{\omega_i^a,h_i^a\}=  \{\omega_i,h_i+\langle \omega_i, \overrightarrow{a} \rangle\},
\end{equation}
where $\langle \omega_i, \overrightarrow{a} \rangle$  is the scalar product of the vectors $\omega_i$   and  $\overrightarrow{a}$.
\end{theorem}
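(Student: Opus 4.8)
The plan is to reduce the statement to the elementary analytic description of an oriented plane from Section 2 and then track how that description transforms under a translation. First I would record the coordinate form of an oriented plane $(\omega,h)$: a point $x\in\mathbb{R}^3$ lies on the plane $e$ precisely when $\langle \omega, x\rangle = h$, and the closed hemispace is $e_- = \{x : \langle \omega, x\rangle \le h\}$. I would check that this is consistent with the sign convention fixed in the paper: for $h\ge 0$ the origin satisfies $\langle \omega, O\rangle = 0 \le h$, so $O\in e_-$, while for $h<0$ we get $O\in e_+$, exactly as required.

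Next I would observe that a translation by $\overrightarrow{a}$ is a rigid motion that leaves every direction fixed, so the outward normal of the $i$-th face of $\overrightarrow{a}\mathbf{P}$ is still $\omega_i$; this immediately gives $\omega_i^a = \omega_i$. Moreover translation preserves the face structure (faces map to faces, adjacencies are unchanged), so the plane carrying the $i$-th face of $\overrightarrow{a}\mathbf{P}$ is exactly the translate of $e_i$ by $\overrightarrow{a}$, oriented in the same outward sense.

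The core computation is then to read off the new signed distance $h_i^a$. A point $y$ lies on the translated plane if and only if $y-\overrightarrow{a}$ lies on $e_i$, that is, $\langle \omega_i, y-\overrightarrow{a}\rangle = h_i$. By bilinearity this rearranges to $\langle \omega_i, y\rangle = h_i + \langle \omega_i, \overrightarrow{a}\rangle$, which is the defining equation of an oriented plane with normal $\omega_i$ and signed distance $h_i + \langle \omega_i, \overrightarrow{a}\rangle$. Hence $h_i^a = h_i + \langle \omega_i, \overrightarrow{a}\rangle$, as claimed.

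I expect the only genuinely delicate point to be the bookkeeping of the sign of $h$: I must make sure the analytic formula $\langle \omega, x\rangle = h$ is the one compatible with the convention that $h$ is positive when $O\in e_-$, since an off-by-a-sign choice here would flip the sign of the correction term $\langle \omega_i, \overrightarrow{a}\rangle$. Everything else is routine: the invariance of the normal and of the combinatorial face structure under translation is standard and can be invoked without computation. If one prefers, the result can instead be derived from Theorem 1 by translating the intersection representation, $\overrightarrow{a}\mathbf{P} = \overrightarrow{a}\bigl(\cap_{i=1}^n \{e_i\}_-\bigr) = \cap_{i=1}^n \overrightarrow{a}\{e_i\}_-$, and identifying each translated hemispace, which yields the same formula.
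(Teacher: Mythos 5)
Your proposal is correct and follows essentially the same route as the paper: observe that translation leaves the outward normals fixed and shifts the signed distance by $\langle \omega_i, \overrightarrow{a}\rangle$. In fact the paper's proof merely asserts these two relations as ``easy to understand,'' so your explicit verification via the defining equation $\langle \omega_i, y\rangle = h_i + \langle \omega_i, \overrightarrow{a}\rangle$, together with the check that the sign convention for $h$ is respected, supplies exactly the detail the paper omits.
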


\begin{proof} It is easy to understand that after translation, the normal direction of $i$-th face does not change, and $h_i^a$ can be given by the following simple relations:
\begin{multline}\label{3}
\,\,\,\,\,\,\,\,\,\,\,\,\,\,\,\,\,\,\,\,\,\,\,\,\,\,\,\,\,\,\,\,\,\,\,\,\,\,\,\,\,\,\,\,\,\,\,\,\,\,\,\,\,\,\,\,\,\,\,\,\,\,\,\,\,\,\,\,\,\,\,\,\,\,\,\,\,\,\,\,\,\,\,\,\,\,\,\,\omega_i^t= \omega_i\\
h_i^t =   h_i+ \langle \omega_i, \overrightarrow{a} \rangle,\,\,\,\,\,\,\,\,\,\,\,\,\,\,\,\,\,\,\,\,\,\,\,\,\,\,\,\,\,\,\,\,\,\,\,\,\,\,\,\,\,\,\,\,\,\,\,\,\,\,\,\,\,\,\,\,\,\,\,\,\,\,\,\end{multline}
\end{proof}

\noindent Let a polyhedron $\mathbf{P}$, with corresponding system $ \{\omega_i,h_i\}$, is rotated with respect to the origin O. We denote by $ \{\omega_i^r,h_i^r\}$ the system of representation of $rot\mathbf{P}$ .

\begin{theorem} Let $\mathbf{P}$ be a convex polyhedron and $ \{\omega_i,h_i\}$ be the collection of its oriented planes. Let $rot\mathbf{P}$ be the rotation of $\mathbf{P}$ with respect to the origin O.. We have
\begin{equation}\label{1.2}
 \{\omega_i^r,h_i^r\}=  \{rot\omega_i,h_i\},
\end{equation}
where  $rot\,\omega_i$   is the  rotation of the direction $\omega_i$ by the same rotation.
\end{theorem}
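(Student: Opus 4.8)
The plan is to use the explicit equation of an oriented plane together with the fact that a rotation about the origin is an orthogonal linear map. First I would record that, under the conventions of Section 2, the oriented plane with data $(\omega,h)$ is exactly the point set $\{x\in\mathbb{R}^3 : \langle \omega, x\rangle = h\}$, where $\omega$ is the outward unit normal and $h$ is the signed distance from O; correspondingly the closed half-space $\{e\}_-$ is $\{x : \langle \omega, x\rangle \le h\}$. This turns the geometric statement into a statement about linear equations, which is the form in which the orthogonality of the rotation can be exploited cleanly.

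Next, writing $R$ for the rotation (an orthogonal transformation satisfying $R^{-1}=R^{T}$ and fixing O), I would describe the rotated plane as the image set $R(e_i)$ and translate membership back into the original equation: a point $y$ lies on $R(e_i)$ if and only if $R^{-1}y$ lies on $e_i$, i.e. if and only if $\langle \omega_i, R^{-1}y\rangle = h_i$. The key step is the orthogonality identity $\langle \omega_i, R^{-1}y\rangle = \langle R\omega_i, y\rangle$, which rewrites the condition as $\langle R\omega_i, y\rangle = h_i$. Reading off the coefficients from this equation, the rotated plane has normal $R\omega_i = rot\,\omega_i$ and signed distance $h_i$, which is precisely the claimed identity $\{\omega_i^r, h_i^r\} = \{rot\,\omega_i, h_i\}$.

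The only genuinely delicate point is verifying that the sign and orientation conventions survive the rotation, so that the resulting triple really is $(rot\,\omega_i, h_i)$ rather than, say, a plane with reversed normal or with $-h_i$. I would settle this by noting that $R$ is a proper (orientation-preserving) rotation, so it carries the outward normal of the $i$-th face to the outward normal of the rotated face, and that since O is a fixed point of $R$, the side of the plane on which O lies is unchanged; hence the rule $h\ge 0 \Leftrightarrow \mathrm{O}\in\{e\}_-$ is preserved and the sign of $h_i^r=h_i$ is correct. Everything else follows from $R$ being a distance-preserving map fixing O, so no substantial computation is required beyond the single orthogonality identity.
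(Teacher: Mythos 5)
Your proof is correct, and it is substantially more of a proof than the paper gives: the paper's argument for this theorem consists of the single assertion that ``it is easy to understand that after rotation, the distance of the plane containing the $i$-th face does not change,'' followed by restating the two equations $\omega_i^r = rot\,\omega_i$ and $h_i^r = h_i$. You instead derive the result from the analytic description of the oriented plane as $\{x : \langle \omega_i, x\rangle = h_i\}$ and the adjoint identity $\langle \omega_i, R^{-1}y\rangle = \langle R\omega_i, y\rangle$ for an orthogonal $R$ fixing O, which is the honest content behind the paper's ``easy to understand.'' What your route buys is exactly the part the paper skips: a verification that the signed distance $h_i$ is preserved \emph{with its sign} (since O is a fixed point of $R$, the half-space containing O maps to the half-space containing O, so the convention $h \ge 0 \Leftrightarrow \mathrm{O} \in e_-$ survives) and that the image normal is still the outward one. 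One minor remark: your appeal to $R$ being a \emph{proper} rotation is not actually needed for the outward-normal claim --- any orthogonal map carries the closed half-space $\{x : \langle \omega_i, x\rangle \le h_i\}$ onto $\{y : \langle R\omega_i, y\rangle \le h_i\}$, so the face-to-half-space relation, and hence outwardness, is preserved automatically; properness only matters if one cares about handedness. This is an over-justification rather than a gap, and the proof stands as written.
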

\begin{proof}
It is easy to understand that after rotation, the distance of the plane containing $i$-th face does not changes, and  $rot\omega_i$ can be given by the following simple relations:
\begin{multline}\label{3}
\,\,\,\,\,\,\,\,\,\,\,\,\,\,\,\,\,\,\,\,\,\,\,\,\,\,\,\,\,\,\,\,\,\,\,\,\,\,\,\,\,\,\,\,\,\,\,\,\,\,\,\,\,\,\,\,\,\,\,\,\,\,\,\,\,\,\,\,\,\,\,\,\,\,\,\,\,\,\,\,\,\,\,\,\,\,\,\,\omega_i^r= rot\,\omega_i\\
h_i^r = h_i\,\,\,\,\,\,\,\,\,\,\,\,\,\,\,\,\,\,\,\,\,\,\,\,\,\,\,\,\,\,\,\,\,\,\,\,\,\,\,\,\,\,\,\,\,\,\,\,\,\,\,\,\,\,\,\,\,\,\,\,\,\,\,\,\,\,\,\,\,\,\,\,\,\,\,\,\,\,\,\,\,\,\,\,\,\,\,\,\,.\end{multline}
\end{proof}

\section{ An example plane based representation of 3D polyhedron}
Plane based representation for cube.
Let $\mathbf{P}$ be a cube, which six faces are numerated as shown in Fig. 4.
We denote by  $\omega_i$ the normal of $i$-th face, and the corresponding system of planes will be $\{ (\omega_1,1),  (\omega_2,1),  (\omega_3,0),  (\omega_4,0), (\omega_5,1),  (\omega_6,0)\}$. Now by using spherical coordinates the cube can be represented by the following system:
$$(90, 0,1), (90,90,1), (90,180,0), (90,270,0), (0,0,1), (180,0,0).$$
By using this representation the cube can be coded by means of  $6\cdot3 = 18$ number, which will require only $18\cdot4 = 72$ bytes for its storage.
For comparison with conventional coding, we represent the cube as an indexed triangular mesh. At first step we should numerate (create indexes) the vertices of the cube. We do it as shown in the Fig. 5 (note that each vertex is coded by three numbers, which are its x, y, z coordinates), and for storing it would be required $8\cdot 3\cdot 4 = 96$ bytes. Additionally should be stored the code for connectivity information among vertices as follows:
 $$(1,4,2), (2,4,3), (1,2,5), (2,6,5), (2,7,6), (2,3,7),$$
 $$ (3,8,7), (3,4,8), (1,4,8), (1,8,5), (5,6,8), (6,7,8),$$
which requires $12\cdot3\cdot·4 = 144 $ bytes. Thus by means of triangular mesh is needed

\noindent$96 + 144 = 240$  bites, which is three times more than the plane based storage requirement.

\begin{figure}[here]
\center{\includegraphics[width=9.8cm, height=7.5cm]{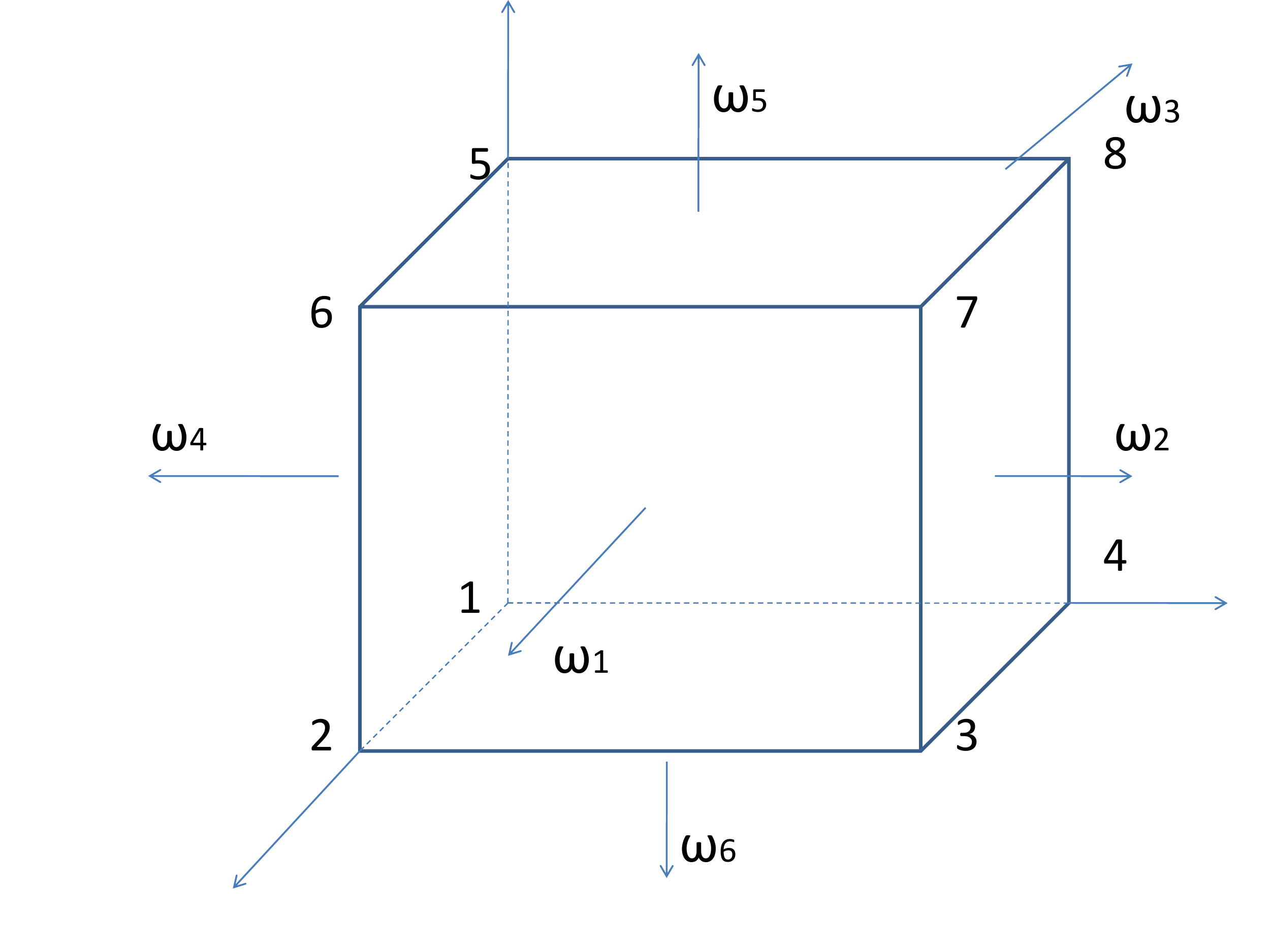}}
       \end{figure}
\centerline{\small{\bf Fig. 4.}}

\section{Lossless versus loss compressions}
Lossless compression
Pure plane based representation of 3D meshes provides lossless compression, since we can exactly restore the initial mesh. As lossless compression its compression ratio depends from polyhedron's shape, and for some polyhedrons it can provide big value of compression.

\noindent In general case the required storage can be calculated for both types of coding as following:
Let P be a convex polyhedron for which the number of faces are $F$ and vertices are $V$.
For coding the polyhedron $\mathbf{P}$ by means of plane based approach will be required $4\cdot3\cdot F = 12F$ bytes.
For coding the polyhedron $\mathbf{P}$ by triangular mesh will be required $4\cdot 3 \cdot V + 4\cdot3\cdot T = (12V + 12T)$ bytes, where $T$ is number of triangles on the faces.  At worst case, when there is only one triangle on each face we would require to store $(12V + 12
F)$ bytes.

\noindent Loss compression
The plane base coding approach additionally allows lossy compression of 3D meshes. The plane based coding is not very sensitive to the removal of some planes from the system of polyhedrons representation, while in contrast to that triangular meshes are quite sensitive to the vertices removal. If a single plane is removed from the plane base representation of a polyhedron it does not change other planes since there is no need of separate topological information.
We suggest the following two algorithms for lossy compression:
	A plane can be removed from the system of representation of a polyhedron if its corresponding face has much smaller surface than other faces. Thus for a given triangle if its area is smaller than certain value $\delta >0$  we can remove its corresponding plane from the representation.
	For two neighbor triangles we can replace their corresponding planes by a single plane if the angle between those planes is quite small (are nearly parallel).  Thus for two given neighbor triangles if the angle between their normals is smaller than certain number $\tau >0$, their corresponding planes can be combined into one plane.	

\section{Representation of a non-convex polyhedron}
We suggest two algorithms for a non-convex polyhedron representation.

1. The first algorithm is the following: we divide a non-convex polyhedron into convex polyhedrons and each of them is represented by its collection of planes.

2. The second algorithm is the following: we divide the surface of a non-convex polyhedron into pseudo-convex and pseudo-concave parts and each of them is represented by its collection of planes.
We denote vertices of an initial triangle as $P_1, P_2, P_3$  and by $e_{ P_1P_2P_3}$ the oriented plane containing that triangle with outside directed normal. We numerate $P_1, P_2, P_3$ in such a way that in the plane $e_{ P_1P_2P_3}$ the insider space of the triangle would be on the left side of the vectors $\overrightarrow{P_1 P_2}$,
$\overrightarrow{P_2 P_3}$,   $\overrightarrow{P_3P_1}$ when we look from the positive hemispace bounded by $e_{ P_1P_2P_3}$.

\noindent Let $P_1 P_2 P_3$  and $Q_1 Q_2 Q_3$  are two triangles. We call them positively oriented to each other if the triangle $P_1 P_2 P_3$ belongs to the negative hemispace bounded by $e_{ Q_1Q_2Q_3}$ and the triangle $Q_1 Q_2 Q_3$ belongs to the negative hemispace bounded by $e_{ P_1P_2P_3}$   (see Fig. 5).

\begin{figure}[here]
\center{\includegraphics[width=9.8cm, height=7.5cm]{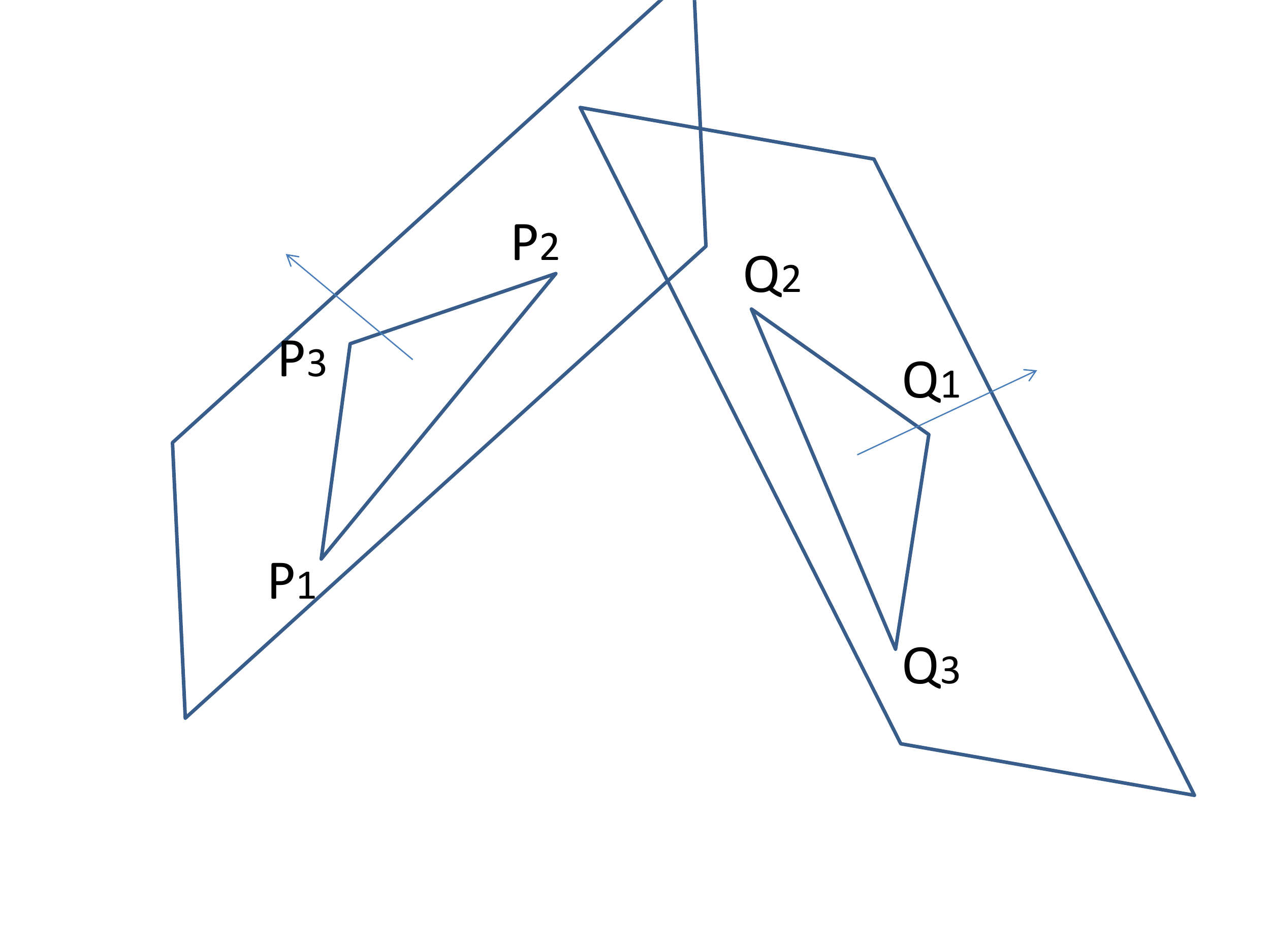}}
       \end{figure}
\centerline{\small{\bf Fig. 5.}}

\noindent  Let $P_1 P_2 P_3$  and $Q_1 Q_2 Q_3$ are two triangles. We call them negatively oriented to each other if the triangle $P_1 P_2 P_3$ belongs to the positive hemi space bounded by $e_{ Q_1Q_2Q_3}$ and the triangle $Q_1 Q_2 Q_3$ belongs to the  positive hemi space bounded by $e_{ P_1P_2P_3}$   (see Fig. 6).

\begin{figure}[here]
\center{\includegraphics[width=9.8cm, height=7.5cm]{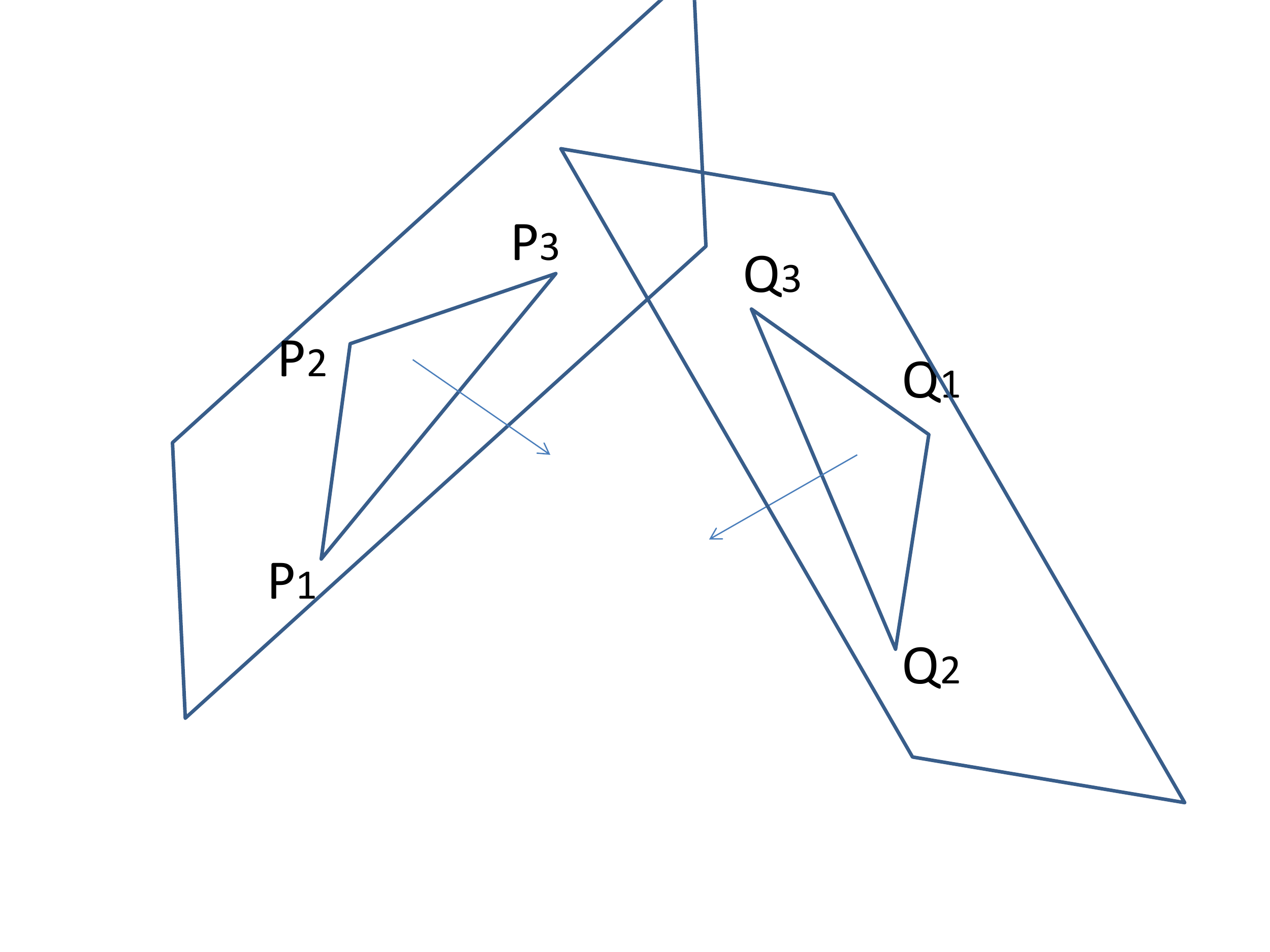}}
       \end{figure}
\centerline{\small{\bf Fig. 6.}}

\emph{ \textbf{Definition 1.} We say that a collection of triangles is pseudo-convex if each two of them are positively oriented to each other .}

\emph{ \textbf{Definition 2.} We say that a collection of triangles is pseudo-concave if each two of them are negatively oriented to each other .}

\noindent We suggest the following algorithm for division of the given non-convex triangular mesh into pseudo-convex and pseudo-concave parts.

\noindent We choose a triangle, which with its neighbor can organize a pseudo-convex collection. Then we verify one by one their neighbor triangles are they positively oriented with all other triangles of the pseudo-convex collection or not, if yes then we add that triangle into the collection and take next neighbor triangle. We end this process if all the neighbors that are not included yet into the pseudo-convex collection are negatively oriented. Thus we get one of the pseudo-convex parts and can remove it from the originally given non-convex mesh.
Then by using same procedure we try to organize another pseudo-convex collection, and after "filling" it with all positively oriented triangles, remove it from the non-convex mesh. We continue this until it becomes impossible to organize any pseudo-convex collection.

\noindent	When all pseudo-convex parts are removed from the original non-convex mesh we start to organize pseudo-concave collections. We choose a triangle, which with its neighbor can organize a pseudo-concave collection. Then we verify one by one their neighbor triangles are they negatively oriented with all other triangles of the pseudo-concave collection or not, if yes then we add that triangle into the collection and take next neighbor triangle. We end this process if all the neighbors that are not included yet into the pseudo-concave collection are positively oriented. Thus we get one of the pseudo-concave parts and can remove it from the originally given non-convex mesh.
Then by using same procedure we try to organize another pseudo-concave collection, and after "filling" it with all negatively oriented triangles, remove it from the non-convex mesh. We continue this until it becomes impossible to organize any pseudo-concave collection.

\noindent Using above presented algorithms we divide the given non-convex mesh into pseudo-convex and pseudo-concave parts.
If occasionally some triangles remain after the above mention procedures of division into pseudo- convex and pseudo-concave parts they can be separately or inside some groups be assumed as pseudo-convex or pseudo-concave parts.

\noindent	When the given non-convex mesh is fully divided into pseudo-convex and pseudo-concave parts we start the process of their polygonization, under which is assumed the unification of all neighbor triangles into polygons if their corresponding oriented planes coincide. After polygonization of   which we denote by $ \{\omega_i,h_i\}_k$, also to each collection we need to include additional planes which will create (cut) the boundaries of the pseudo-convex and pseudo-concave parts since their shapes are not always closed surfaces. Because each of the pseudo-convex and pseudo-concave parts represents the part of the surface of originally given non-convex mesh, after their combination the whole surface can be reconstructed uniquely.
As a result of this procedures the original non-convex polyhedral mesh will be coded by the groups of the planes $ \{\omega_i,h_i\}_k$, where each of the group will represent the certain piece of the whole surface.

\noindent Here we compare the storage requirements of proposed plane based and conventional coding methods.

\begin{figure}[here]
\center{\includegraphics[width=8.8cm, height=6.5cm]{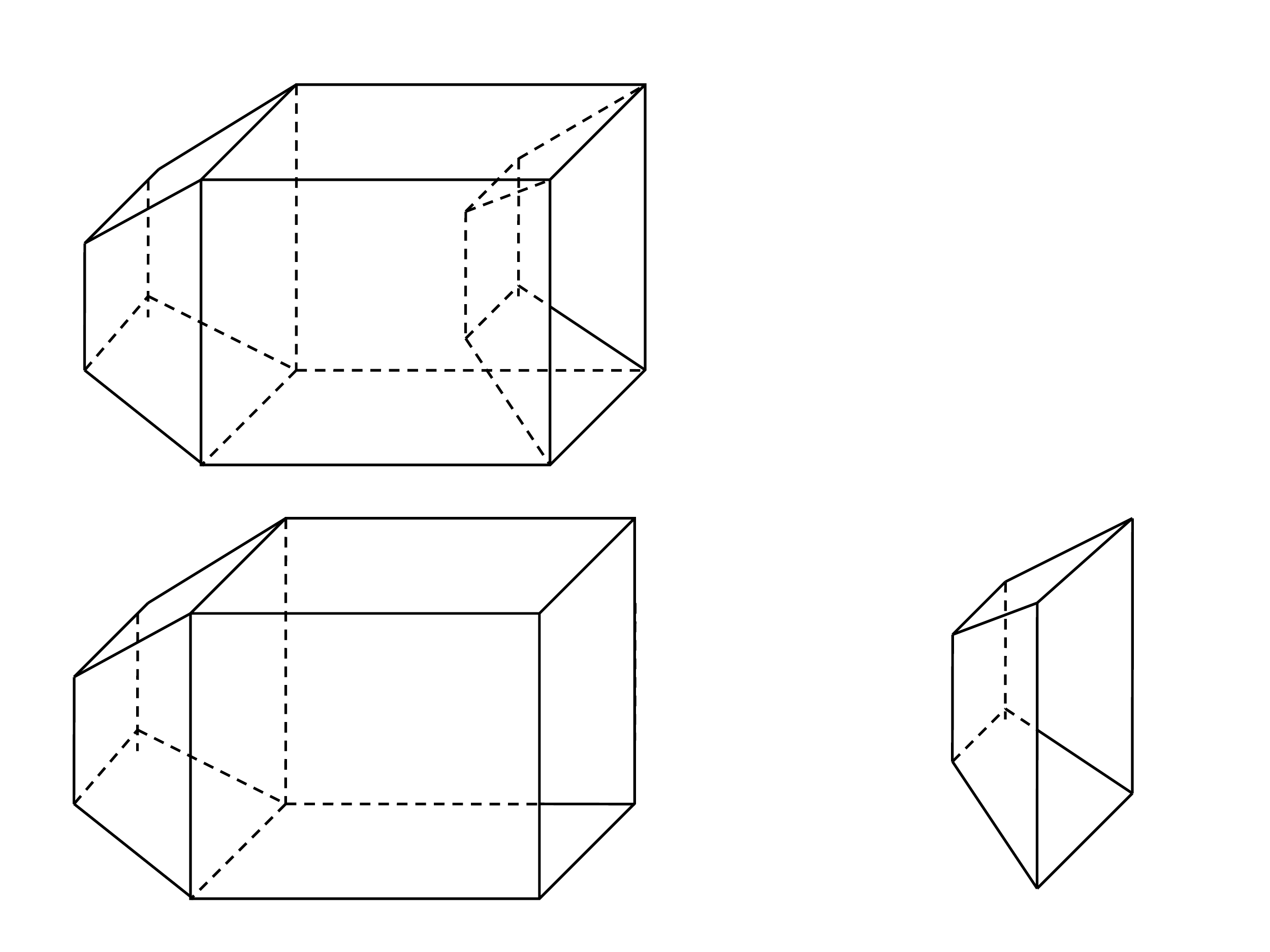}}
       \end{figure}
\centerline{\small{\bf Fig. 7.}}

\noindent Let $\mathbf{P}$ be a non-convex polyhedron shown in Fig. 7.  For its coding by means of plane based approach we divide its surface into pseudo-convex and pseudo-concave parts.
As shown in figure, the pseudo-convex part contains 9 polygons, and additionally we need to include 4 planes for limitation of its bounding borders. So for its storage by means of the planes based approach will be required $3\cdot4\cdot 13 = 156$ bytes.
As shown in figure, the pseudo-concave part contains 5 polygons, and additionally we need to include 4 planes for limitation of its bounding borders. So for its storage by means of the planes based approach will be required $3\cdot 4\cdot 9 = 108$ bytes.
Thus for coding the non-convex whole polyhedron P by means of the plane based approach will be required to store $156 + 108 =264$ bytes.
If the polyhedron $\mathbf{P}$ would be stored by means of conventional quadrangular mesh based approach then we would need to store the following. For storing the data of the 16 vertices would be required $3\cdot4\cdot16 = 192$ bytes. Additionally should be stored the connectivity information among the vertices, which requires the storage of 14 quadrangles such as $4\cdot3\cdot4\cdot14 = 672$ bytes.
Thus for coding the non-convex whole polyhedron P by means of the quadrangular mesh will be required $192 + 672 = 864$ bytes, which is more than three times more than for the plane based storage is required.

\section{Conversion formulas}
In this section we present mathematical formulas for conversion of a triangular 3D mesh into planes based representation.
Let $P_1 P_2 P_3$   be a triangle of a triangular polyhedral mesh, and we should convert it into the plane base representation $(\omega, h) = ( \nu,\varphi, h)$ with outside directed normals. It is easy to mention that $\omega$ is the normalized vector product of the vectors $\overrightarrow{P_1 P_2}$,
$\overrightarrow{P_2 P_3}$ :
$$\omega   =   \frac{\overrightarrow{P_1 P_2}\times\overrightarrow{P_2 P_3}}{|\overrightarrow{P_1 P_2}\times\overrightarrow{P_2 P_3}|}$$
where $|\cdot |$ denotes the length of the vector.
When vertices are given by Euclidean coordinates $P_i = (x_i, y_i, z_i)$, $ i = 1, 2, 3$ by means of  well known  formulas they can be found by the corresponding coordinates $\omega = (\omega_x, \omega_y, \omega_z)$.
It easy to see that $h$ is the scalar product of the vectors $\omega$ and $\overrightarrow{OP_1}$, where $O$ is the origin.
$$h =\omega\cdot \overrightarrow{OP_1}= \omega_x x_1 + \omega_y y_1 + \omega_z z_1.$$

\noindent The spherical coordinates $\nu$, $\varphi$ of the $\omega$ can be calculated by the following formulas:

$$\cos\nu  = \omega_z$$

$$\cos\varphi  = \frac{\omega_x}{\sqrt{1-\omega_z^2}}$$


\begin{thebibliography}{}

\bibitem{S} W. J. Schroeder, "Decimation of Triangle Meshes", Computer Graphics Proceedings, Annual Conference Series, pp. 65-70, ACM SIGGRAPH, 1992.

\bibitem{Tu} Turk, "Re-tiling Polygon Surfaces", Computer Graphics Proceedings, Annual Conference Series, pp.55-64, ACM SIGGRAPH, July, 1992.

\bibitem{Ho} Hoppe et al.,  "Mesh Optimization", Computer Graphics Proceedings, Annual Conference Series, pp.19-26, ACM SIGGRAPH, august, 1992.

\bibitem{De} Deering,  "Geometry Compression", Computer Graphics Proceedings, Annual Conference Series, pp.13-20, ACM SIGGRAPH, august, 1995.

\bibitem{Ta} Taubin, "Geometric Compression Through Topological Surgery", Tech. Rep. Rc-20340, IBM Watson Research Center, January, 1996.

\bibitem{H} Hoppe, "Progressive Meshes", Computer Graphics Proceedings, Annual Conference Series, pp.99-108, ACM SIGGRAPH, august 1996.


\end{thebibliography}
\end{document}